\long\def\comment#1{}
\newfont{\bbb}{msbm10 scaled 800}
\newfont{\bb}{msbm10 scaled 1100}
\newcommand{\CC}{\mbox{\bb C}}
\newcommand{\EE}{\mbox{\bb E}}
\newcommand{\sv}{{\bf s}}
\newcommand{\xv}{{\bf x}}
\newcommand{\yv}{{\bf y}}
\newcommand{\zv}{{\bf z}}
\newcommand{\Cm}{{\bf C}}
\newcommand{\Hm}{{\bf H}}
\newcommand{\Sm}{{\bf S}}
\newcommand{\Um}{{\bf U}}
\newcommand{\Vm}{{\bf V}}
\newcommand{\Xm}{{\bf X}}
\newcommand{\Ym}{{\bf Y}}
\newcommand{\Zm}{{\bf Z}}
\newtheorem{thm}{Theorem}
\newtheorem{lemma}{Lemma}
\newtheorem{remark}{\indent \bf Remark}
\begin{document}

\title{On Uplink-Downlink Duality for Cellular IA}

\author{Vasilis~Ntranos$^\dagger$, 
        Mohammad~Ali~Maddah-Ali$^\ast$,  and
        Giuseppe~Caire$^\dagger$ \\
        $^\dagger$University of Southern California, Los Angeles, CA,  USA\\ 
				$^\ast$Bell Labs, Alcatel-Lucent, Holmdel, NJ, USA }
        
\maketitle

\begin{abstract}
In our previous work \cite{nmc14,nmc14b} we considered the \emph{uplink} of a hexagonal cellular network topology and showed that linear ``one-shot'' interference alignment (IA) schemes  are able to achieve the optimal degrees of freedom (DoFs) per user, under a  decoded-message passing framework that allows base-stations to exchange their own decoded messages over local backhaul links. In this work, we provide the dual framework for the \emph{downlink} of cellular networks with the same backhaul architecture, and show that for every ``one-shot'' IA scheme that can achieve $d$ DoFs per user in the uplink, there exists a dual ``one-shot'' IA scheme that can achieve the same DoFs in the downlink. 
To enable ``Cellular IA'' for the downlink, base-stations will now use the same local backhaul links to exchange quantized versions of the 
dirty-paper precoded signals instead of user messages.
\end{abstract}
\vspace{-0.2in}
\section{Introduction}
In \cite{nmc14,nmc14b} we have shown that practical ``one-shot'' interference alignment schemes can achieve the optimal degrees of freedom in the uplink of a cellular network topology in which base-stations (receivers) can exchange decoded messages locally over the backhaul links. A natural question that comes to mind is whether similar results can be obtained for the downlink of such networks. In the downlink, the local backhaul connections between base-stations can be used to enable \emph{transmitter cooperation}, as opposed to receiver cooperation in the uplink.

Cooperation of multiple base-stations in the downlink with some form of data or signal sharing through the wired backhaul network is a subject of intense research both in terms of information theoretic fundamental limits and in terms of practical signaling/coding schemes. 
Most works have focused on the setting where a central processor is connected to all the cooperating base-stations through orthogonal noiseless links of given capacity 
$R_0$ (e.g., see \cite{ssps09b,simeone09,hongcaire13}).
This model is a special case of the general broadcast-relay channel \cite{krv12}
where a transmitter wishes to send independent messages to multiple users through a layer of relays (the base-stations) and where the first hop of this two-hop communication scenario is formed by the set of orthogonal noiseless links. From a practical viewpoint, such architecture is usually referred to as ``cloud base-station'' (or C-RAN, in the parlance of 3GPP-LTE). The first hop (from the central processor to the base-stations is often referred to as ``fronthaul'', and the base-stations are just simple antenna heads, whose only task consists of converting the precoded digital signal generated by the central processor into an RF signal to be dirty-paper transmitted on the downlink \cite{lightradio, flanagan11,IBMcloud10}.

A different approach is taken in \cite{lsw07,llsw12} where 
local message sharing at the base-stations is considered for the linear Wyner model, 
which induces a cognitive interference channel with cognition at the transmitter, where  cognition corresponds to message side information according to the fixed sharing pattern.
Also, in  \cite{gav12,UIUC11,UIUC12,UIUC14} such setting is extended to the case where base-station
an average backhaul rate constraint is imposed, which allows the time sharing between different message sharing patterns and overall the achievability of tight DoFs results for 
the linear Wyner model and for certain two-dimensional models with limited connectivity of the interference  graph. 

The scheme presented in this paper is radically different from all of the above. On one hand, it requires only
local communication with per-link rate constraint between neighboring base-stations, without joint central processing, and allows to handle two dimensional hexagonal cell patterns (with or without sectors), with higher connectivity than what studied in \cite{gav12,UIUC11,UIUC12,UIUC14}. 
On the other hand, the local communication between base-stations is used to exchange a suitably quantized version of the
dirty-paper coded (DPC) signal that each base-station can use in order to precode for the known interference. 
This induces a directed interference graph for the downlink, and allows alignment techniques that can be regarded as ``duals'' of what we have used for the uplink in [1,2].

More specifically, we will propose here  a \emph{successive encoding scheme} for the downlink, based on dirty-paper coding (DPC), that enables a directed network interference cancellation over the backhaul across the entire cellular system; base-stations will  first quantize and then share their dirty-paper precoded signals with their neighbors, who can in turn successively encode their messages using DPC to avoid  the known interference. Within this framework, we will show that any  DoFs that are achievable by ``one-shot'' interference alignment in the uplink of a cellular system with a given decoding order $\pi$, are also achievable in the downlink of this network with the same linear IA precoding scheme, as long as the corresponding \emph{encoding order} $\overline \pi$ (under which base-stations  encode, quantize and share their dirty-paper signals) is reversed.

 It is worth pointing out that dirty-paper coding  plays a fundamental role in the proposed achievable scheme for the downlink (unlike in \cite{gav12,UIUC11,UIUC12,UIUC14}, where optimal DoFs
can be achieved with linear precoding). Furthermore, it is also remarkable that, under our framework, base-stations do not share neither messages nor quantized received signals (as in the Quantize-Remap and Forward paradigm of \cite{adt11}), but quantized (dirty-paper) precoded signals.

 This paper is organized as follows. In Section \ref{sec:downlink} we will describe  the  successive DPC scheme that enables  the corresponding network interference cancellation framework, and in Section \ref{sec:duality} we will show that under this framework, ``one-shot'' schemes based on cellular interference alignment for the downlink can be directly obtained by uplink-downlink duality.

\section{Successive encoding for the downlink}\label{sec:downlink}

In this section we will focus on two neighboring base-stations of the cellular network that are connected through a limited capacity backhaul link and describe how they can successively encode their messages using dirty-paper coding such that interference is pre-canceled in one direction. 
We will consider here for simplicity the case where both transmitters and receivers are equipped with a single antenna  in order to outline the main idea behind our successive encoding scheme.

\begin{figure}[ht]

                \centering
                \includegraphics[width=\columnwidth]{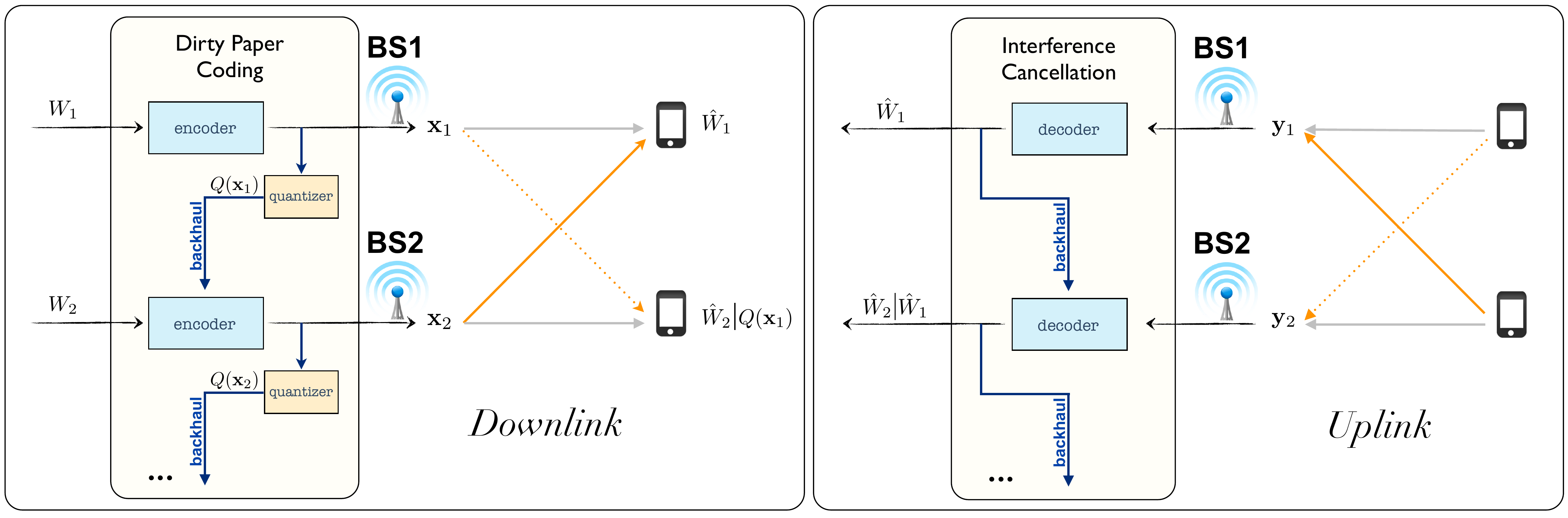}
                \caption{Successive decoding in the uplink versus successive encoding in the downlink. In both cases, base-station $1$ (BS1) will use the backhaul to give the corresponding information to base-station $2$ (BS2). In the uplink, BS2 can use $\hat{W}_{1}$ to reconstruct the corresponding signal and subtract the interference coming from user $1$. In the downlink, BS2 can use $Q(\xv_{1})$ and dirty-paper coding  (DPC) to avoid  interference from BS1.}
                \label{uldl}

\end{figure}

\clearpage
The received signals observed at the mobile users associated with base-station 1 (BS1) and base-station 2 (BS2) are given by
\begin{align}
\yv_{1}&= \tilde h_{11}\xv_{1} + \tilde h_{12}\xv_{2} + \zv_{1}, \nonumber \\
\yv_{2}&= \tilde h_{22}\xv_{2} +\tilde h_{21}\xv_{1} + \zv_{2}, 
\end{align}
where $\xv_{1}$, $\xv_{2}$  are the transmitted signals (represented as vectors with block length $n$) of BS1 and BS2, satisfying the average power constraint $\frac{1}{n}\EE[\xv_{i}^{\rm H}\xv_{i}] \leq P$, $i=1,2$ and $\zv_{i}$ is i.i.d Gaussian noise with unit variance. We will assume without loss of generality that the BS1 has already encoded its message using DPC, to eliminate some other interfering links in the network, and focus on BS2. 
Fig.~1 shows the successive encoding scheme for the downlink (in the setting we consider here) in comparison to the successive decoding scheme that we used in our previous work for the uplink. 

In the downlink, BS1 will first quantize its transmitted dirty-paper signal $\xv_{1}$ to obtain $Q(\xv_{1})$. {Since $\xv_{1}$ is Gaussian i.i.d \footnote{It is well-known that the dirty paper precoded signal can be treated as Gaussian iid. This follows from the fact that the random variable X 
forming the auxiliary  variable $U = X + \alpha S$ in Costa's coding scheme \cite{costaDPC} is Gaussian and independent of the interference S, and from standard strong typicality arguments (see e.g., the appendix of \cite{wnc10}).
Also, if the universal lattice precoding scheme of \cite{zse02}
is used instead of Costa's scheme, it is well-known that the dithered modulo lattice precoded signal
is Gaussian i.i.d. in the limit of large dimension for a sequence of shaping-good lattices. 
}} with average power $P$, its quantized version $Q(\xv_{1})$ will also be Gaussian i.i.d, and can be represented at rate $R(D)=\log(P/D)$ with average distortion (quantization noise variance) given by $\frac{1}{n}\EE\left[||\xv_{1} - Q(\xv_{1})||^{2}\right] \leq  D$.  In order to keep the quantization noise at the system's noise level, we will set the distortion $D = 1$. 
 Now, assuming that the backhaul rate between BS1 and BS2 is at least $\log(P)$, we can  let BS1 give its quantized dirty-paper signal $Q(\xv_{1})$ to BS2. As a result, BS2 will know the quantized part of interference $Q(\xv_{1})$ coming from BS1 and can use it to successively encode its own dirty-paper signal as follows.  The observed signal at the intended receiver of BS2 can be written as 
\begin{align}
\yv_{2} &= \tilde h_{22}\xv_{2} +\tilde h_{21}\xv_{1} + \zv_{2} \nonumber\\
&= \tilde h_{22}\xv_{2} +\tilde h_{21}Q(\xv_{1}) + \tilde h_{21}\underbrace{(\xv_{1}-Q(\xv_{1}))}_{{\mbox{\footnotesize quantization noise}}} + \zv_{2} \nonumber\\ &= \tilde h_{22}\xv_{2} +\tilde h_{21}Q(\xv_{1}) +{\zv_{Q}}+ \zv_{2},
\end{align}
where  ${\zv_{Q}}\triangleq \tilde h_{21}(\xv_{1}-Q(\xv_{1}))$ denotes the effective i.i.d Gaussian noise with variance $|\tilde h_{21}|^{2}$ due to quantization. Notice that since the quantization noise  ${\zv_{Q}}$ is independent of $\xv_{1}$, the above observation can be written in the standard form:    
\begin{equation}
\yv_{2} = \tilde h_{22}\xv_{2} + \sv + \tilde \zv,
\end{equation}
where $\sv=\tilde h_{21}\cdot Q(\xv_{1})$ is the known interference at BS2 and $\tilde \zv = {\zv_{Q}}+ \zv_{2}$ is the effective  Gaussian noise with variance $1+|\tilde h_{21}|^{2}$. Using DPC at BS2 to avoid the known  interference $\sv$, we can  obtain an achievable rate at user 2 given by
\vspace{-0.1in}
\begin{equation}
R_{2} = \log\left(1+ \frac{|\tilde h_{22}|^{2}P}{1+|\tilde h_{21}|^{2}}\right),
\end{equation}
which has the same pre-log factor equal to 1 DoF as if interference was not present, due to the fact that the quantization noise variance is constant and not a function of $P$.
Therefore, at high SNR, we can see that this successive encoding scheme for the downlink has exactly the same network interference cancellation properties as the successive decoded message passing scheme that we have used for the uplink; In the following section we will use this scheme to obtain the corresponding uplink-downlink duality result for the ``one-shot'' DoFs achievable by Cellular IA.


\begin{remark} \rm It is worth pointing out that sharing the quantized dirty-paper signals is fundamental for this scheme to be embedded in the context of a larger cellular network. To enable network interference cancellation, one could be tempted to use an approach in which base-stations share \emph{user messages} instead of quantized codewords  over the backhaul.  
However, we can see that in that case, interference would  propagate through the cellular system -- from neighbor-to-neighbor, along the network interference cancellation paths -- and subsequent base-stations would observe interfering signals that are functions of all their predecessors' messages in the encoding order.
\end{remark}

\section{Uplink-Downlink Duality}\label{sec:duality}

In this section we will consider the cellular model introduced in \cite{nmc14,nmc14b}. For completeness, we  provide here the main definitions that will be used in the rest of this section.

\begin{itemize}
\vspace{-0.1in}
\item
{\bf Interference Graph:}
The interference graph $\cal G (\cal V, \cal E)$ of a cellular network is an undirected graph in which vertices $v\in\cal V$ represent transmit-receive pairs within a cell and edges $(u,v) \in \cal E$ indicate interfering neighbors.
\hfill $\lozenge$

\item 
{\bf Decoding/Encoding order:}
A decoding (or encoding) order $\pi$ is defined as a 
partial order ``$\prec_{\pi}$'' over the  set of vertices $\cal V$ in the above interference graph.  According to $\pi$, the cell associated with vertex $v \in \cal V$ will decode (or encode) its message before the one associated with vertex $u \in \cal V$ if $v \prec_{\pi}u$. 
\hfill $\lozenge$

\item
{\bf Directed Interference Graph ${\cal G}_{\pi}$:} For a given partial order ``$\prec_{\pi}$'' on $\cal V$, the directed interference graph is defined as ${\cal G}_{\pi}({\cal V}, {\cal E}_{\pi})$ 
where 
${\cal E}_{\pi}$ is a set of ordered pairs $[u,v]$ given by 
$
{\cal E}_{\pi} = \left\{[u,v] : (u,v) \in {\cal E} \mbox{ and }  v\prec_{\pi}u   \right\}
$. \hfill $\lozenge$

\end{itemize}

A first step towards our main uplink-downlink DoF duality result will be to show that the encoding scheme based on dirty-paper coding that we introduced in the previous section is indeed able to successively remove directed interfering links in the downlink, across the entire network $\cal G (\cal V, \cal E)$, according to a given (predefined) encoding order $\pi$.

 Let $\Vm_{v}\in \CC^{M\times d_{v}},\Um_{v}\in \CC^{N\times d_{v}}$ denote the transmit and receive beamforming matrices associated with each cell $v\in \cal V$, where $M$, $N$ is the number of the available transmit/receive antennas and $d_{v}$ is the number of transmitted signals (of block length $n$) in cell $v\in \cal V$  denoted by $\Xm_{v} \in \CC^{d_{v} \times n}$. 
Further, let $\Hm_{uv} \in \CC^{N\times M}$ denote the (constant, flat-fading) channel gains  between the transmitter of cell $v\in \cal V$ and the receiver of the cell  $u\in \cal V$, that are chosen at random from a continuous distribution and are identically zero for all $(u,v)\notin {\cal E}, u\neq v$.

\begin{lemma}
The effective channel between the downlink transmit-receive pair  associated with cell $u\in \cal V$ after DPC is given by
\begin{equation}
\Um_{u}^{\rm H}\Ym_{u} = \Um_{u}^{\rm H}\Hm_{uu}\Vm_{u}\Xm_{u} + \sum_{v: \;u \prec_{\pi} v }  \Um_{u}^{\rm H}\Hm_{uv}\Vm_{v}\Xm_{v} + \Um_{u}^{\rm H}\tilde\Zm_{u},
\label{eff}
\end{equation}
where the columns of $\tilde\Zm_{u}\in \CC^{ N \times n}$ are i.i.d zero-mean Gaussian noise vectors with covariance  
\begin{equation}
\Cm_{\zv} \triangleq \frac{1}{n}\EE[\tilde\Zm_{u}^{\rm H}\tilde\Zm_{u}]\; =\; {\bf I}\;\, + \sum_{v: \; u \phantom{.}_{\pi}\succ v} \Hm_{uv}^{\phantom{H}}\Vm_{v}^{\phantom{H}}\Vm_{v}^{\rm H}\Hm_{uv}^{\rm H}.
\label{cov}
\end{equation}
\end{lemma}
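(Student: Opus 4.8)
The plan is to obtain the whole-network identity \eqref{eff}--\eqref{cov} by applying the single-link dirty-paper argument of Section~\ref{sec:downlink} to the \emph{aggregate} of all previously encoded interferers, organised along the encoding order $\pi$. First I would write the signal received by the intended mobile of cell $u$ before any receive processing,
\begin{equation}
\Ym_u = \Hm_{uu}\Vm_u\Xm_u + \sum_{v\neq u}\Hm_{uv}\Vm_v\Xm_v + \Zm_u,\nonumber
\end{equation}
where the columns of $\Zm_u$ are i.i.d.\ unit-variance Gaussian and $\Hm_{uv}=\zerov$ unless $v$ is an interfering neighbour of $u$. Because the encoding order orients every interfering edge, each neighbour $v$ is comparable to $u$, so the interference sum splits cleanly into the cells encoded \emph{before} $u$ (those with $v\prec_\pi u$, i.e.\ $u\succ_\pi v$) and those encoded \emph{after} $u$ (those with $u\prec_\pi v$).

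The crux is the ``before'' group. By the definition of $\pi$, every $v\prec_\pi u$ has already formed its dirty-paper stream $\Xm_v$, quantised it, and forwarded $Q(\Xm_v)$ to $u$ over the backhaul; hence $u$ knows the aggregate dirt $\Sm \triangleq \sum_{v:\,v\prec_\pi u}\Hm_{uv}\Vm_v Q(\Xm_v)$ at the instant it encodes. Splitting each such term as $\Hm_{uv}\Vm_v\Xm_v = \Hm_{uv}\Vm_v Q(\Xm_v) + \Hm_{uv}\Vm_v\bigl(\Xm_v-Q(\Xm_v)\bigr)$ and collecting, the observation takes the standard dirty-paper form
\begin{equation}
\Ym_u = \Hm_{uu}\Vm_u\Xm_u + \Sm + \sum_{v:\,u\prec_\pi v}\Hm_{uv}\Vm_v\Xm_v + \tilde\Zm_u,\nonumber
\end{equation}
with $\tilde\Zm_u \triangleq \Zm_u + \sum_{v:\,v\prec_\pi u}\Hm_{uv}\Vm_v\bigl(\Xm_v-Q(\Xm_v)\bigr)$. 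I would then invoke Costa's result \cite{costaDPC} (equivalently the dithered-lattice scheme of \cite{zse02}) exactly as in Section~\ref{sec:downlink}, now applied to the single known dirt $\Sm$: it can be pre-cancelled at the encoder of $u$ with no power penalty, so the effective channel is the one above with $\Sm$ deleted. Left-multiplying by $\Um_u^{\rm H}$ yields \eqref{eff}.

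It then remains to evaluate the covariance of $\tilde\Zm_u$. As in Section~\ref{sec:downlink}, at the fixed distortion $D=1$ the quantisation error $\Xm_v-Q(\Xm_v)$ is i.i.d.\ zero-mean Gaussian with per-stream covariance $D\,\Id_{d_v}=\Id_{d_v}$ and is uncorrelated with the reconstruction $Q(\Xm_v)$ (the Gaussian rate--distortion test channel), hence with the dirt $\Sm$; this is what legitimises the preceding DPC step. Distinct cells carry independent messages and are quantised independently, so the errors are mutually independent and independent of $\Zm_u$, and the cross terms vanish. Each residual $\Hm_{uv}\Vm_v\bigl(\Xm_v-Q(\Xm_v)\bigr)$ thus contributes $\Hm_{uv}\Vm_v\Id_{d_v}\Vm_v^{\rm H}\Hm_{uv}^{\rm H}=\Hm_{uv}\Vm_v\Vm_v^{\rm H}\Hm_{uv}^{\rm H}$, and adding the $\Id$ from $\Zm_u$ gives \eqref{cov}.

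The step I expect to be the real obstacle is not any inequality but the ordering bookkeeping together with the non-propagation property stressed in the Remark. One must verify that for \emph{every} predecessor $v\prec_\pi u$ the quantised stream $Q(\Xm_v)$ is genuinely available before $u$ encodes, and that because \emph{quantised precoded signals} (rather than messages) are shared, the dirt $\Sm$ is a fixed, known sequence that does not expand recursively into the data of $v$'s own predecessors. Granting this, the aggregate $\Sm$ behaves as a single known interference and the single-link computation of Section~\ref{sec:downlink} transfers verbatim, leaving only the additive, $P$-independent covariance \eqref{cov} as quantitative content.
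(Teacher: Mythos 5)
Your proof follows essentially the same route as the paper's: split the interference at cell $u$ according to the encoding order $\pi$, absorb the quantized signals $Q(\Xm_{v})$ of the already-encoded neighbours into a single known dirt that DPC at $u$ removes, and fold the quantization residuals $\Xm_{v}-Q(\Xm_{v})$ together with $\Zm_{u}$ into the effective noise $\tilde\Zm_{u}$. The differences are cosmetic (you apply $\Um_{u}^{\rm H}$ at the end rather than at the start, and you spell out the $D=1$ per-stream covariance computation and the test-channel independence that the paper leaves implicit), so the proposal is correct.
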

\begin{proof}
The projected received signal at cell $v\in \cal V$ is given by 
$$\Um_{u}^{\rm H}\Ym_{u} = \Um_{u}^{\rm H}\Hm_{uu}\Vm_{u}\Xm_{u} + \sum_{v\in {\cal N}({u})}  \Um_{u}^{\rm H}\Hm_{uv}\Vm_{v}\Xm_{v} + \Um_{u}^{\rm H}\Zm_{u},$$
where ${\cal N}({u})$ denotes the set of all the interfering neighbors of  $u\in \cal V$. This set can be partitioned according to the encoding order $\pi$ into two sets, $\{v \in {\cal N}({u}) : \;u \prec_{\pi} v \}$ and $\{v \in {\cal N}({u}) : \; u \phantom{.}_{\pi}\hspace{-0.05in}\succ v \}$. All the transmitters $v$ that belong to the set that has already encoded their messages (i.e, $u \phantom{.}_{\pi}\hspace{-0.05in}\succ v$) will quantize their DPC signals $\Xm_{v}$ into $Q(\Xm_{v})$ at rate $d_{v}\cdot\log(P)$ and  give them to base-station~$u$ through the backhaul. The received signal can hence be written as
\begin{align}
\Um_{u}^{\rm H}\Ym_{u} \;\;=\;\; &\Um_{u}^{\rm H}\Hm_{uu}\Vm_{u}\Xm_{u} 
+ \sum_{v: \;u \prec_{\pi} v }  \Um_{u}^{\rm H}\Hm_{uv}\Vm_{v}\Xm_{v} 
+ \underbrace{\sum_{v: \;u \phantom{.}_{\pi}\succ v}  \Um_{u}^{\rm H}\Hm_{uv}\Vm_{v}\cdot Q(\Xm_{v})}_{\mbox{\footnotesize known interference}} \nonumber\\
&+ \Um_{u}^{\rm H}\underbrace{\sum_{v: \;u \phantom{.}_{\pi}\succ v}  \Hm_{uv}\Vm_{v}(\Xm_{v}-Q(\Xm_{v}))}_{\mbox{\footnotesize quantization noise}}
+ \Um_{u}^{\rm H}\Zm_{u}.
\label{7}
\end{align}
If we let $$\tilde \Zm_{u} \triangleq  \Zm_{u} + \sum_{v: \;u \phantom{.}_{\pi}\succ v}  \Hm_{uv}\Vm_{v}(\Xm_{v}-Q(\Xm_{v}))$$ and encode $\Xm_{u}$ using DPC  to avoid the known interference we obtain  the effective channel given by (\ref{eff}) with (column-wise) i.i.d Gaussian noise whose covariance is given by (\ref{cov}). 
\end{proof}

\begin{remark}{\rm Although in  the above scheme base-stations share quantized DPC signals instead of messages, the rate required for the backhaul links
in the downlink is the same (in the leading order of $P$) as the rate required for corresponding the local message-passing scheme  in the uplink. This follows from the fact that DPC in this setting is used on top of  the linear precoding scheme over the antennas: If we let $\tilde\Hm_{uv}\triangleq\Um_{u}^{\rm H}\Hm_{uv}\Vm_{v}\in \CC^{d_{u}\times d_{v}}$, $\forall (u,v)\in \cal E$, then from each encoder's perspective, DPC is performed over a $d_{u}\times d_{u}$ equivalent MIMO channel: 
$$\tilde\Ym_{u} = \tilde\Hm_{uu}\Xm_{u} + \tilde \Sm + \tilde\Zm_{u},$$
with $d_{u}$-dimensional interference $\tilde\Sm=\sum_{v: \;u \phantom{.}_{\pi}\succ v}  \tilde\Hm_{uv}\cdot Q(\Xm_{v}).$ The interference $\tilde\Sm$ is therefore known to the encoder, as long as the corresponding base-station is able to get the $d_{v}$-dimensional $Q(\Xm_{v})$ at rate $\log(P)$ per dimension over the local backhaul links. This is exactly the same backhaul rate scaling required for exchanging messages and hence both the downlink and the uplink schemes can operate under the same backhaul network infrastructure.
}  
\end{remark}

\begin{figure}[ht]

                \centering
                \includegraphics[width=\columnwidth]{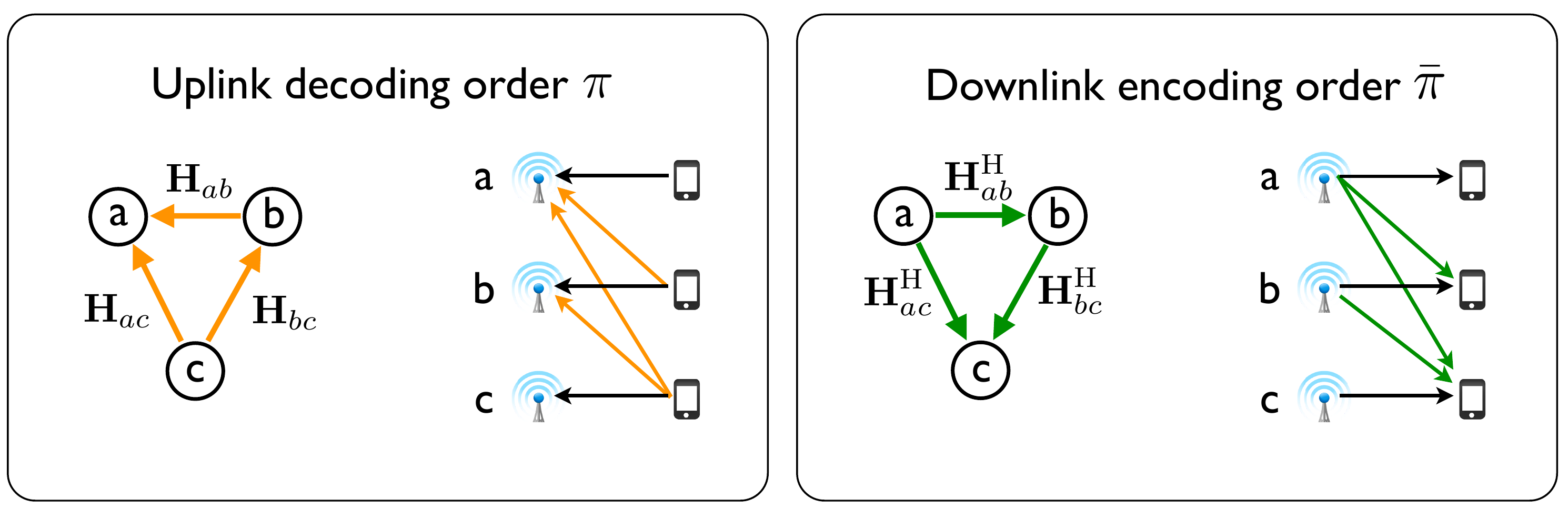}
                \caption{Uplink and Downlink with reverse decoding/encoding orders, $\pi$ and $\overline \pi$. After the corresponding network-wide interference cancellation in both cases, the remaining interference channel gains for the downlink are reciprocal to the ones obtained in the uplink, and are given by $\overline{\bf H}_{uv} = {\bf H}_{vu}^{\rm H}$, $\forall [v,u]\in {\cal E}_{\overline\pi}$.}
                \label{dual}

\end{figure}

\begin{thm}[Uplink-Downlink Duality]
Any degrees of freedom  $\{d_{v},v\in \cal V\}$ that are achievable in the uplink of the cellular network $\cal G(V,E)$ in ``one-shot'' by linear IA with beamforming matrices $\{\Vm_{v}\in\CC^{M\times d_{v}}, v\in \cal V\}$ and $\{\Um_{u}\in \CC^{N\times d_{v}}, u\in \cal V\}$,  under the network interference cancellation framework with decoding order $\pi$, are also achievable in the downlink of the same cellular network $\cal G(V,E)$ under the successive dirty-paper coding framework with the reverse encoding order $\overline\pi$ and beamforming matrices given by 
$\{\overline\Vm_{v}=\Um_{v}\in\CC^{N\times d_{v}},v\in \cal V\}$ and $\{\overline\Um_{u}=\Vm_{u}\in\CC^{M\times d_{u}},u\in \cal V\}$.
\end{thm}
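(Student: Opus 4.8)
The plan is to reduce the theorem to a side-by-side algebraic comparison of the uplink feasibility conditions with the effective downlink channel produced by the Lemma. First I would make precise what ``achievability of $\{d_v\}$ by one-shot linear IA in the uplink under decoding order $\pi$'' means. After the decoded-message cancellation, the residual signal at cell $u$ retains interference only from the not-yet-decoded neighbors $v$ with $u\prec_\pi v$; hence separating $d_u$ interference-free streams is equivalent to the two conditions (i) $\mathrm{rank}\,(\Um_u^{\rm H}\Hm_{uu}\Vm_u)=d_u$ for every $u\in\cal V$, and (ii) $\Um_u^{\rm H}\Hm_{uv}\Vm_v=0$ for every $(u,v)\in\cal E$ with $u\prec_\pi v$. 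By hypothesis the given uplink beamformers $\{\Vm_v\},\{\Um_u\}$ satisfy (i) and (ii).

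Next I would apply the Lemma to the downlink with encoding order $\overline\pi$ and the swapped beamformers $\overline\Vm_v=\Um_v$, $\overline\Um_u=\Vm_u$, obtaining the effective channel (\ref{eff}) with residual-noise covariance (\ref{cov}): successive DPC pre-cancels the interference from every already-encoded neighbor (those $v$ with $v\prec_{\overline\pi}u$), leaving residual interference at cell $u$ only from the neighbors $v$ with $u\prec_{\overline\pi}v$, while $\Cm_{\zv}$ is independent of $P$. Invoking channel reciprocity $\overline\Hm_{uv}=\Hm_{vu}^{\rm H}$ and substituting the swapped beamformers, the effective direct gain becomes $\overline\Um_u^{\rm H}\overline\Hm_{uu}\overline\Vm_u=\Vm_u^{\rm H}\Hm_{uu}^{\rm H}\Um_u=(\Um_u^{\rm H}\Hm_{uu}\Vm_u)^{\rm H}$, which has the same rank $d_u$ as its uplink counterpart by (i), and each residual interference gain becomes $\overline\Um_u^{\rm H}\overline\Hm_{uv}\overline\Vm_v=(\Um_v^{\rm H}\Hm_{vu}\Vm_u)^{\rm H}$.

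The crux is then a purely combinatorial identification of the two constraint families. A residual downlink interference term at $u$ is indexed by a neighbor $v$ with $u\prec_{\overline\pi}v$; since $\overline\pi$ reverses $\pi$, this is precisely $v\prec_\pi u$. For such a pair the transposed gain above vanishes iff $\Um_v^{\rm H}\Hm_{vu}\Vm_u=0$, which is exactly condition (ii) read with receiver $v$ and not-yet-decoded interferer $u$ --- valid precisely when $v\prec_\pi u$. Thus the transpose-and-reverse operation maps the downlink interference constraints \emph{bijectively} onto the uplink constraints that hold by hypothesis, so every residual term is zero. What remains at each cell is a clean full-rank $d_u\times d_u$ MIMO link whose additive noise has covariance $\Cm_{\zv}$ independent of $P$; exactly as in the single-link computation of Section~\ref{sec:downlink}, the pre-log is unaffected and $d_u$ DoFs per cell are achieved. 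I expect the main obstacle to be the bookkeeping: carefully tracking which neighbors are DPC-cancelled versus zero-forced under $\overline\pi$, and verifying that this transpose-and-reverse correspondence is an exact bijection between the two constraint families rather than a mere inclusion.
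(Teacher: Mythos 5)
Your proposal is correct and follows essentially the same route as the paper's own proof: invoke the successive-DPC lemma for the downlink with encoding order $\overline\pi$, use channel reciprocity $\overline\Hm_{uv}=\Hm_{vu}^{\rm H}$ together with the swapped beamformers $\overline\Vm_v=\Um_v$, $\overline\Um_u=\Vm_u$, and observe that transposing and order-reversing maps the residual downlink interference constraints exactly onto the uplink zero-forcing conditions (\ref{eq:condAAA}) while preserving the rank condition (\ref{eq:condBBB}) and the $P$-independence of the effective noise. The only cosmetic difference is that you verify the constraint correspondence term by term after applying the lemma, whereas the paper first derives the downlink IA conditions (\ref{cond})--(\ref{z}) algebraically and then applies the lemma; the content is identical.
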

\begin{proof}
Let the partial order ``$\prec_{\overline\pi}$'', defined on the set $\cal V$, be the inverse of ``$\prec_{\pi}$'' such that 
\begin{equation}
u\prec_{\overline\pi}v \Leftrightarrow v\prec_{\pi}u, \;\forall u,v \in \cal V,
\label{ord}
\end{equation}
and consider the corresponding directed interference graphs ${\cal G}_{\pi}({\cal V}, {\cal E}_{\pi})$ for the uplink and ${\cal G}_{\overline\pi}({\cal V}, {\cal E}_{\overline\pi})$ for the downlink. 
Since the degrees of freedom $\{d_{v},v\in \cal V\}$ are achievable in the uplink we can argue that the corresponding beamforming matrices chosen for the uplink, $\{\Vm_{v}\in\CC^{M\times d_{v}}, v\in \cal V\}$ and $\{\Um_{u}\in \CC^{N\times d_{v}}, u\in \cal V\}$, satisfy:
\begin{align}
&\Um_{u}^{\rm H}\Hm_{uv}\Vm_{v} = 0,\; \forall [v,u] \in \cal E_{\pi}\label{eq:condAAA}, \;\mbox{and}\\
&\mbox{rank}\left(\Um_{v}^{\rm H}\Hm_{vv}\Vm_{v}\right) = d_{v},\; \forall v\in \cal V.
\label{eq:condBBB}\end{align}

As illustrated in Fig.~\ref{dual}, for every directed edge $[v,u]\in \cal E_{\pi}$ there exists a directed edge $[u,v]\in \cal E_{\overline\pi}$ and 
the corresponding channels  are reciprocal to each other. That is,  
 the downlink channel matrices denoted by $\overline\Hm_{vu}\in \CC^{M\times N}$, $[u,v]\in \cal E_{\overline\pi}$ are given by \begin{equation}\overline\Hm_{vu}= \Hm_{uv}^{\rm H},\label{11}\end{equation} where $\Hm_{uv}\in \CC^{N\times M}$ are the corresponding uplink channel matrices associated with opposite edges $[v,u]\in \cal E_{\pi}$.
Now, we can rewrite (\ref{eq:condAAA}) as follows. 
\begin{align}
\Um_{u}^{\rm H}\Hm_{uv}\Vm_{v} = 0,\; \forall [u,v] \in \cal E_{\pi} \Leftrightarrow\;
&\Um_{v}^{\rm H}\Hm_{uv}\Vm_{u} = 0,\; \forall [u,v] \in \cal E_{\overline\pi} \label{12}\\
\Leftrightarrow\;&\Vm_{u}^{\rm H}\Hm_{uv}^{\rm H}\Um_{v} = 0,\; \forall [u,v] \in \cal E_{\overline\pi}\label{13}\\\Leftrightarrow\;
&\Vm_{u}^{\rm H}\overline\Hm_{vu}\Um_{v} = 0,\; \forall [u,v] \in \cal E_{\overline\pi},\label{14}
\end{align}
where (\ref{12}) follows from the fact that $\pi$ and $\overline\pi$ satisfy (\ref{ord}), (\ref{12}) is obtained  by transposing all equations, and (\ref{14}) by substituting the downlink channel matrices from (\ref{11}).

It has become clear now from (\ref{14}) that if we choose the downlink transmit beamforming matrices $\overline\Vm_{v} \in \CC^{N\times d_{v}}$ to be the corresponding uplink receive beamforming matrices $\Um_{v} \in \CC^{N\times d_{v}}$ and vice versa (i.e., $\overline\Um_{u}=\Vm_{u}$), the following IA conditions are satisfied in the downlink: 
\begin{align}
&\overline\Um_{u}^{\rm H}\overline\Hm_{uv}\overline\Vm_{v} = 0,\; \forall [v,u] \in \cal E_{\overline\pi}, \label{cond}\;\mbox{and}\\
&\mbox{rank}\left(\overline\Um_{v}^{\rm H}\overline\Hm_{vv}\overline\Vm_{v}\right) = d_{v},\; \forall v\in \cal V.\label{z}
\end{align}

Now from Lemma~1 we have that the  signal observation for every receiver $u\in \cal V$ is given by 
\begin{equation}
\overline\Um_{u}^{\rm H}\Ym_{u} = \overline\Um_{u}^{\rm H}\overline\Hm_{uu}\overline\Vm_{u}\Xm_{u} + \sum_{v: \;u \prec_{\pi} v }  \overline\Um_{u}^{\rm H}\overline\Hm_{uv}\overline\Vm_{v}\Xm_{v} + \overline\Um_{u}^{\rm H}\tilde\Zm_{u},
\end{equation}
and from (\ref{cond}) we can see that $\sum_{v: \;u \prec_{\pi} v }  \overline\Um_{u}^{\rm H}\overline\Hm_{uv}\overline\Vm_{v}\Xm_{v} = 0$, which in turn yields
\begin{equation}
\overline\Um_{u}^{\rm H}\Ym_{u} = \overline\Um_{u}^{\rm H}\overline\Hm_{uu}\overline\Vm_{u}\Xm_{u} + \overline\Um_{u}^{\rm H}\tilde\Zm_{u}.
\end{equation}

From (\ref{z}) and since the noise variance does not scale with the transmit power $P$, we can argue that every transmit-receive pair  $u\in \cal V$ in the downlink cellular network $\cal G(V,E)$, will achieve $d_{u}$ degrees of freedom and we conclude the proof.
\end{proof}

%
%

\bibliographystyle{ieeetr}
\bibliography{referencesIT}

\end{document}